\newtheorem{theorem}{Theorem}[section]
\newtheorem{remark}{Remark}
\newtheorem{example}{Example}
\def \Box {\vrule height5pt width5pt depth0pt}
\def \endproof{\hfill \Box \vskip .5cm}
\title{A note on averaging prediction accuracy, Green's functions and other  kernels}
\author{  Juan Galvis\thanks{Departamento de Matem\'aticas, Universidad Nacional de Colombia, Carrera 45 No. 26-85, Edificio Uriel Gutierr\'ez, Bogot\'a D.C., Colombia, \tt \{jcgalvisa, fohernandezr, fagomezj\}@unal.edu.co.} \and
Freddy Hern\'andez-Romero$^*$
\and Francisco G\'omez$^*$}
\begin{document}

\maketitle


\begin{abstract} We present the mathematical context of the predictive accuracy index and then introduce the definition of integral average transform. We establish the relation of our definition with two variables kernels $K({\bf y},{\bf x})$.  As an example of an application we show that integrating against the fundamental solution of the Laplace operator, that is, solving the Poisson equation, can be re-interpreted as an integral of averages of the forcing term over balls. As a result, we obtained a  novel integral representation of the solution of the Poisson equation. Our motivation comes from the need for a better mathematical understanding of the prediction accuracy index. This index is used to identify hot spots in predictive security and other applications.\\

{{\bf MSC2020}: 35C15, 35Q62, 60-04.}
\end{abstract}

\section{Introduction}

We start by analyzing a procedure that is used in
predictive security applications  to
 identify crime hot spots, \cite{Chainey_2008,joshi2021considerations,drawve2016metric, mohler2020learning}. This procedure is known as the Prediction Accuracy Index (PAI) and it is used to compare among possible hot spots regions. In this context, a higher PAI is preferred to a lower one. 
 In particular, in some applications it is used (along side other measures such as distances and divergences \cite{mohler2020learning,drawve2016metric}) as an indicator of similarity between two densities of random variables (a density from a prediction and the other from an observation). 
 Roughly speaking, the PAI of a region is defined as the ratio between the hit rate and the volume proportion (see \cite{Chainey_2008,joshi2021considerations,drawve2016metric} and the next section).
When the PAI is used as a similarity measure between two densities, say $\phi$  and $\psi$, this indicator is computed as follows
\begin{enumerate}
    \item Identify several regions of high probability according to $\psi$.
    \item Evaluate the PAI on each one of the regions identified before.  Here the idea is that higher values of PAI will indicate certain similarity between  $\phi$ and $\psi$.
\end{enumerate}
In case several regions of high-probability are 
selected in 1. and 2. above, one can add to this procedure the following:
\begin{enumerate}
\item[3.] Compute the average among the PAI indicators obtained above. See Section \ref{sec:motivation}.
\end{enumerate}

Our preliminary analysis readily shows that PAI measures (even average ones) shall  not be used as a similarity measure between two densities and it have to be used cautiously to identify hot spots. See also 
\cite{joshi2021considerations,drawve2016metric,mohler2020learning} for some other issues related to the PAI indicator. 
We believe this is an important take home message in this note. See
Remark \ref{remarkPAI}.\\

Anyhow, the procedure described above (1., 2. and 3.)  led us to the 
 definition of the integral average transform. 
See Section \ref{sec:IAT} and Remark \ref{remarkPAIandIAT}. The integral average transform is defined for a given function 
$f({\bf y})$, assumed regular enough just to fix ideas. The transform is the result of integrating {\it plain unweighted} averages of $f$ over a family of neighborhoods of 
a point ${\bf x}$, say $B_{s,{\bf x}}$ that is indexed by the real parameter $s$. 
Note that we are talking about plain averages 
$\frac{1}{|B_{s,{\bf x} }|}\int_{B_{s,{\bf x} }} f$ and not weighted averages such as 
$\frac{1}{|B_{s,{\bf x} }|}\int_{B_{s,{\bf x} }} \omega f$.
See 
\eqref{eq:defIntegralAverageTransform} for a precise definition. We show that this procedure is equivalent to computing integrals against a two variables kernel $K({\bf y},{\bf x})$ with a possible singularity at $x$, that is, computing, 
\[\int f({\bf y})K({\bf y},{\bf x})d{\bf y}.
\] 
We also show that the integral above can be interpreted as  an integral average transform, that is, an integral of {\it regular} averages of $f$ over a parametrized family of regions.
As an example of application we show that integrating against the
fundamental solution of the Laplace equation, that is, solving the Poisson equation, can be re-interpreted as an integral of averages of the forcing terms over balls. See Section \ref{sec:IAT} and Remark \ref{remarkGreenIAT}.

The rest of the paper is organized as follows. 
In Section \ref{sec:motivation} we present the mathematical context related to the PAI including some modification know as the penalized PAI. We introduce the average PAI that motivates our main definition. In Section \ref{sec:IAT} we present the definition of integral average transform and, as an application of this novel interpretation of kernels, we show how to write the fundamental solution of the Laplace operator as an integral average transform. 
In Section \ref{sec:conclusions} we present some final comments.

\section{Averaging prediction accuracy}\label{sec:motivation}

In this section we present the motivation that lead us to study integrals of average of function on parametrized regions.\\

Let us consider the real random variables $X$ and $Y$. Denote by 
$\phi=d\mu_X$ and $\psi=d\mu_Y$ be probability density functions associated with $X$ and $Y$, respectively.  Given a measurable set $B$ we write $|B|$ for the Lebesgue measure of $B$.
Denote by $[\psi> r ]$ the set with probability density level higher (greater) than $r\in \mathbb{R}$, that is,
\begin{equation}
[\psi> r ]=\{ {\bf z}\in \mathbb{R}^n  :\quad \psi({\bf z}) > r \} = \psi^{-1}((r,+\infty)). 
\end{equation}

Let us define $R:=\max_{\bf z}\psi({\bf z})$. In some practical applications a threshold value $r \in (0,R)$ is determined and the region $[\psi> r]\subset \mathbb{R}^n$ is refereed to as \emph{hot spot area} (and in case of being the union of disjoint small regions these regions are know as hot spots), see \cite{Chainey_2008} and references therein. In fact, note that if $r$ is close to $R$ the set $[\psi> r]\subset \mathbb{R}^n$ will contain the most probable occurrences of $Y$.

It is common in applications to use the hot spots of the random variable $Y$ to predict the hot spots of another a random variable $X$. A common way to measure the performance of such predictions is by using the so-called
\emph{Prediction Accuracy Index} (PAI) defined next (see \cite{Chainey_2008}).
In some cases, when several possible predictions of $X$ are available, it is used the PAI measure to decide which model to use to report hots spots predictions of the target $X$. See \cite{Chainey_2008,joshi2021considerations}.

In order to fix ideas and simplify the presentation we assume that $\phi$ and $\psi$ are continuous functions but we can as well consider discretized version up to some given spatial resolution as it is most common in applications.

\subsection{PAI of a subregion}
Assume that we want to analyze a measurable study region $A$ with positive volume measure. Region $A$  corresponds to the study region where 
the random variable $X$  occur. 
The region $B\subset A$ is a possible hot spot region for $X$ in $A$. Define the hit rate by (\cite{Chainey_2008}) 
\[
H_{\phi}(B)=\frac{\displaystyle \int_{B} \phi}{\displaystyle \int_A \phi}.
\]
If $|B|>0$ define the average of $\phi$ on $B$ by 
\[
 \fint_{B} \phi=\frac{1}{|B|} \int_{B} \phi.
\]
Note that in our notation for $H_\phi$ we do not make explicit reference to the region $A$ since we assume $A$ is fixed.
A commonly used indicator (\cite{Chainey_2008}) of the quality of using $B$ to approximate the hot spots of $X$ in $A$ is given by the ratio of the hit rate with the percentage of volume $\frac{|B|}{|A|}$, that is, 
\[
PAI_\phi(B)= \frac{H_\phi(B)}{\frac{|B|}{|A|}}
=\frac{|A|}{\displaystyle \int_{A}\phi }\frac{\displaystyle \int_{B}\phi }{|B|}
=\frac{\displaystyle \fint_{B} \phi}{\displaystyle \fint_{A} \phi}.\]
Note that the $PAI_\phi(B)$ can also be  written as the ratio between the averages of $\phi$ on 
$B$ and $A$ which gives another interpretation of the $PAI$ that readily reveals some possible issues with this indicator such as the preferences for small subregions with high-value of $\phi$.\\

In order to better understand this indicator we present some simple examples.
Note that if  $B=A$ we get $PAI_\phi(B)=1$. If for instance we select $B$ with volume percentage
\[
\frac{|B|}{|A|}=0.5
\]
and  hit rate 
\[
H(B)=0.5
\]
we still have $PAI_\phi(B)=1$ since we can predict half of occurrences of $X$ in half of the study area.
Assume now that we have 
\[
\frac{|B|}{|A|}=0.2
\mbox{ and }
H(B)=0.8.
\]
We then have $PAI_\phi(B)=4$ since we can predict 80\% of occurrences of $X$ in 20\%  of the study area.

In practical applications we are interested in finding regions with a high value of PAI for a given density 
$\phi$. We make the following observations:

\begin{itemize}
\item Since we always have 
$ \fint_{B} \phi\leq R=\max_{{\bf z}\in\mathbb{R}^n} \phi({\bf z})$ we see that
\[
PAI_\phi(B)
=\frac{\displaystyle \fint_{B} \phi}{\displaystyle \fint_{A} \phi}
\leq \frac{R}{\displaystyle \fint_{A} \phi}.
\]

\item Let $\phi^{-1}(R)=\{ {\bf z}\in \mathbb{R}^n \, : \, 
\phi({\bf z})=R\}=\arg\max_{\bf z}\psi({\bf z})$. If  $B^*=\phi^{-1}(R)\cap A$ is such that  $|B^*|>0$ then $B^*$ 
 and all its possitive measure subsets maximizes the PAI indicator. In fact, 
 for all $\widetilde{B}\subset B^*$ such that  $|\widetilde{B}|>0$ we have 
\[
PAI_\phi(\widetilde{B})
= \frac{R}{\displaystyle \fint_{A} \phi}.
\]
We then see that, in practical applications the PAI indicator will favor small area regions around the maximum values of the density $\phi$. Here small will depend on the spatial resolution at which subregions are computed. This might not be convenient in some applications as pointed out in \cite{joshi2021considerations}.

\item If $B^*$ defined above is such that 
$|B^*|=0$ then, given a subregion $\widetilde{B}\subset A$ with 
$|\widetilde{B}|>0$ and 
$B^*\subsetneq \widetilde B$, there always exists 
$\widehat{B}$ with $|\widehat{B}|>0$ and  $ B^*\subsetneq \widehat B \subsetneq \widetilde B$  such that 
\[
PAI_\phi(\widetilde B)
< PAI_\phi(\widehat B) =\frac{\displaystyle \fint_{\widehat B} \phi}{\displaystyle \fint_{A} \phi}
<\frac{R}{\displaystyle \fint_{A} \phi}.
\]
\end{itemize}

Therefore we see that searching for Borel subregions 
with as high PAI as possible is not a well posed problem under general considerations. Other possible optimization problems may be needed for practical applications. See \cite{joshi2021considerations, drawve2016metric,mohler2020learning} for more details.\\

Due to the fact that the PAI indicator prefers small area regions 
some modifications have been introduced. Among them, in  
\cite{joshi2021considerations} was introduced a penalized PAI where 
the area ratio is penalized, that is, 
\[
PPAI_\phi(B)= \frac{H_\phi(B)}{\left(\frac{|B|}{|A|}\right)^\alpha}
= 
\left(\frac{|B|}{|A|}\right)^{1-\alpha}
PAI_\phi(B)=\lambda(B)PAI(B).
\]
Here there was introduced the penalization 
$\lambda(B)$. The exponent $\alpha$ may depend on $B$, e.g.,   $\alpha=H_\phi(B)$ (and in this case, for 
small hit rate, the indicator is penalized multiplicatively by the volume proportion of the region $B$ while for large hit rate close to 1 we do not have that penalization, \cite{joshi2021considerations}).
Many other penalization alternatives can also be consider at the light of practical applications. For instance, a penalization of the form  
\[
\lambda(B)=\frac{|B|}{|\partial B|}
\]
where $|\partial B|$ is the surface volume of $\partial B$ with $B$ regular enough. See \eqref{weight} below. \\

One additional observation is the following. 
One can think to compute an average of several PAI values over subregions. 
Let \[B_{N}\subseteq B_{N-1}\subseteq \cdots \subseteq B_{1}\]
and consider 
\[
\mbox{av} P=\frac{1}{N} \sum_{i=1}^{N} 
PPAI_\phi(B_i).
\]
Define the piecewise constant  layered  function
\[
K({\bf x}) = 
\frac{1}{N} \sum_{i=1}^k \lambda(B_i)\frac{1}{|B_i|\ \ } 
\mbox{ for  } {\bf x}\in B_k \setminus B_{k+1},
\]
$k=1,\dots,N$, where $B_{N+1}= \emptyset$. We have that 
\begin{equation}\label{layeredcake}
\mbox{av} P=\frac{|A|}{\displaystyle \int_{A}\phi }\ \ \int_{\mathbb{R}^n}\phi ({\bf x})K({\bf x})d{\bf x}.
\end{equation}
Note that the function $K$ depends on the sets 
$B_i$ and the weight $\lambda$ but we not make this dependence explicit in our notation. 
We conclude that the average value  (of PAI indicators)  over some regions  corresponds to the inner product between $\phi$ and a piecewise constant function that weights the regions by the inverse their areas penalized by $\lambda$.
The layered  function $K$ can be taught as an approximation of a kernel that has singularities in the region
$B_N$ where it takes maximum value.

\subsection{PAI of a random variable}
Recall that $\psi$ is the density of the random variable $Y$ that we want to use to predict the hot spots of the random variable $X$. See \cite{Chainey_2008}. Assume that $\psi$ is continuous and that 
$\int_A \psi >0$. For any $s\in[0,1]$ define 
$r=r(s)$    by 

\begin{equation}\label{eq:def:r}
r(s)= \inf \Big \{r\geq 0 : \quad \int_{[\psi >r]\cap A}\psi=(1-s) \int_A \psi
\Big \}
\end{equation}
and the subregion $B_s$  by 
\begin{equation}\label{eq:def:Bs}
B_s=B_s(\psi)=[\psi>r(s)]\cap A.
\end{equation}
Define the prediction accuracy index at level $s$  as the PPAI of the subregion $B_s$.
\begin{equation}
   p(s)= p(s;\psi,\phi)=PPAI_\phi(B_s)=    \frac{ |A|}{\displaystyle \int_A \phi} \lambda(B_s)\frac{\displaystyle \int_{B_s} \phi }{
    |B_s| } .
\end{equation}
As before,   the $s$ level
    prediction accuracy index is computed by 
    dividing the hit rate by the volume percentage using the region 
    $B=[\psi>r(s)]$ and multiplying by a penalization. \\

From the comments on the previous subsection we have for the case $\lambda=1$ (no penalization), 
\begin{itemize}
    \item $ p(s;\psi,\phi)\leq \frac{R}{ \fint_A\phi}$  
     \item $\frac{r(s)}{ \fint_A\phi} \leq p(s;\phi,\phi)\leq \frac{R}{  \fint_A\phi}$     
     \item   If  $B^*=\phi^{-1}(R)\cap A$  is such that  $|B^*|>0$ then $p(1;\phi,\phi)=\frac{R}{\fint_A\phi}$. 
     \item If $B^*$ defined above is such that $|B^*|=0$ and there exists $s \in (0,1)$ such that $B_s \subset A$ with $|B_s|>0$ and $B^*\subsetneq B_s$, then for $\epsilon>0$ small enough we have $B_s\subsetneq B_{s-\epsilon}$ and
    \[
p(s-\epsilon;\phi,\phi)<p(s;\phi,\phi)\leq \frac{R}{\fint_A\phi}.
\]
  \item If $A=[\phi>0]$ (with finite volume)  we have 
    $p(0;\phi,\phi)=1$.

\end{itemize}

As mentioned before, in order to have an overall quantity, the performance may be measured by an average of the prediction accuracy index at different levels. More precisely, chose an integer $N$ and define \begin{equation}
    P_N(\psi,\phi)=\frac{1}{N}\sum_{i=1}^Np\left(\frac{i}{N}; \psi,\phi\right).
\end{equation}

Note that, under appropriate assumptions we shall have a limiting value when $N\to \infty$. We define
\begin{equation}\label{eq:def:Ppsiphi}
    P(\psi,\phi) = \lim_{N\to \infty}  P_N(\psi,\phi)= \int_0^{1}
    p(s;\psi,\phi)ds.
\end{equation}


\subsection{Average PAI and   kernels} 
Note that, by using Fubini's theorem (see also \eqref{layeredcake}), 
\begin{eqnarray*}
 P(\psi,\phi) &=& \frac{1}{ \fint_A\phi}  \int_0^{1} 
    \frac{\lambda(B_s)}{
    |B_s| } \int_{B_s} \phi({\bf y})d{\bf y} \, ds \\&=& 
    \frac{1}{ \fint_A\phi}  \int_0^{1} \displaystyle \int_{\mathbb{R}^n} \phi({\bf y})
    \frac{\lambda(B_s) }{
    |B_s| }{ 1}_{B_s}({\bf y})d{\bf y}\,  ds\\
    &=&
   \frac{1}{ \fint_A\phi}   \int_{\mathbb{R}^n}  \phi({\bf y})\left(\int_0^{1}
    \frac{\lambda(B_s) }{
    |B_s| }{ 1}_{B_s}({\bf y})ds \right)d{\bf y}=
    \frac{1}{ \fint_A\phi} \int_{\mathbb{R}^n}  \phi K_\psi d{\bf y} .
 \end{eqnarray*}
Here we have introduced the ``layered'' function 
\[
K_\psi({\bf y})=\int_0^{1}
    \frac{\lambda(B_s) }{
    |B_s| }{ 1}_{B_s}({\bf y})\, ds .
\]
In case $\psi$ is continuous then $K_\psi$ and $\psi$ have the same level sets. Indeed, if we put 
$t(\bf{y})$ defined such that $r(t(\bf{y}))=\phi({\bf y})$ where
$r(s)$ is defined in \eqref{eq:def:r} we have that 
\[
{\bf 1}_{B_s}({\bf y})=\begin{cases}
1, & s\leq t(\bf{y})\\
0, & s>t(\bf{y}).
\end{cases}
\] Then, 
\[
K_\psi({\bf y}) =
    \int_0^{t(\bf{y})}
    \frac{\lambda([\psi>r(s)]) }{
    |[\psi>r(s)]| }ds = \int_0^{t(\bf{y})}
   \frac{\lambda(B_s) }{
    |B_s| }ds.
\]
Summarizing, 
\begin{equation} \label{eq:PAIandInnerProduct}
P(\psi,\phi)=
    \frac{1}{\fint_A\phi} \int_{\mathbb{R}^n}  \phi({\bf y}) K_\psi({\bf y}) d{\bf y} .
\end{equation}
The $K_\psi$ is a positive function with the same level curves as $\psi$ with maximum value (and possible singularities) at 
$B^*=\arg\max_{\mathbb{R}^n} \psi\cap A$. Appearance of singularities depends on the integrability of the function 
\[
s\mapsto \frac{\lambda(B_s)}{|B_s|}=\frac{\lambda([\psi>r(s)])}{|[\psi > r(s)]|} \mbox{ in }
[0,1].
\]
Therefore, the value $P(\psi,\phi)=
    \frac{1}{\fint_A\phi} \int_{\mathbb{R}^n}  \phi K_\psi d{\bf x}$ will give higher weight to the regions containing  $B^*=\arg\max_{\mathbb{R}^n} \psi \,\cap A$. To further clarify our point we present the next example. 
  \begin{figure}\label{fig:ejemplo1}
\includegraphics[scale=0.15]{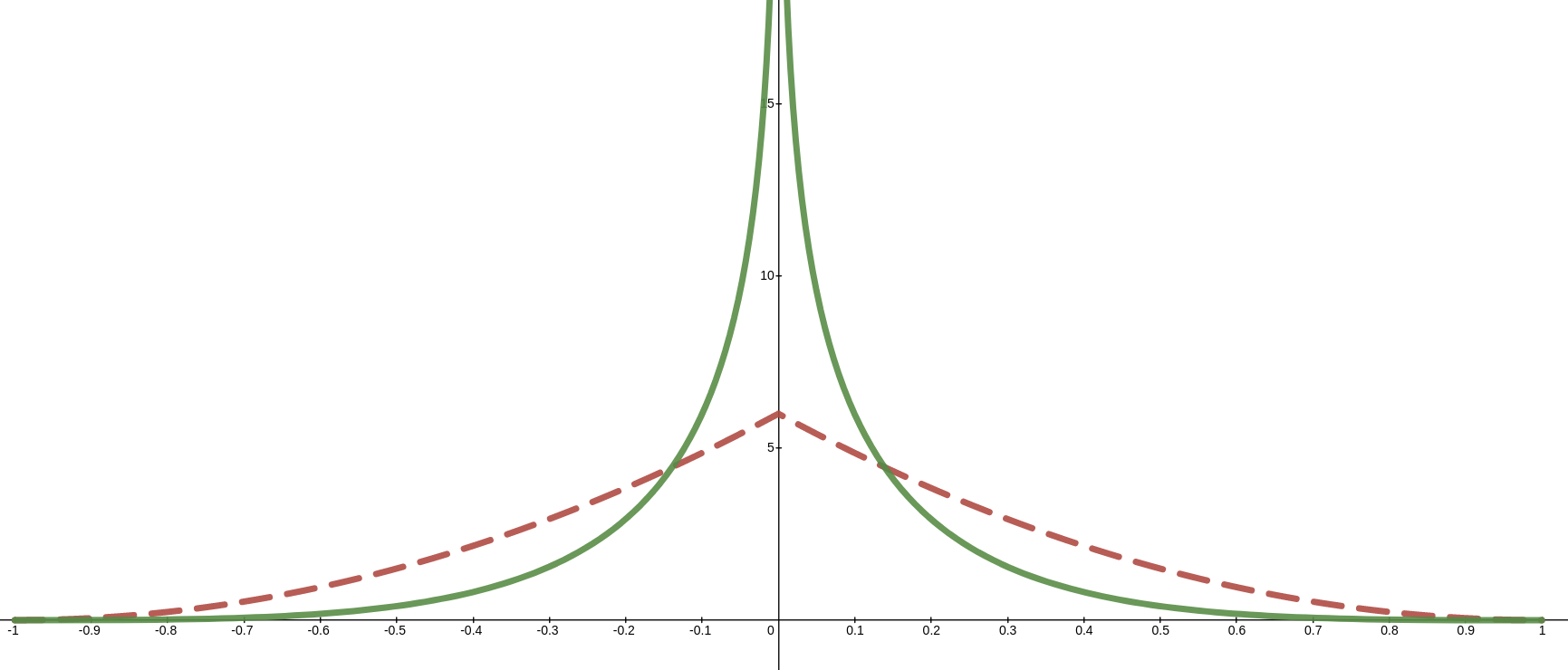}
\includegraphics[scale=0.172]{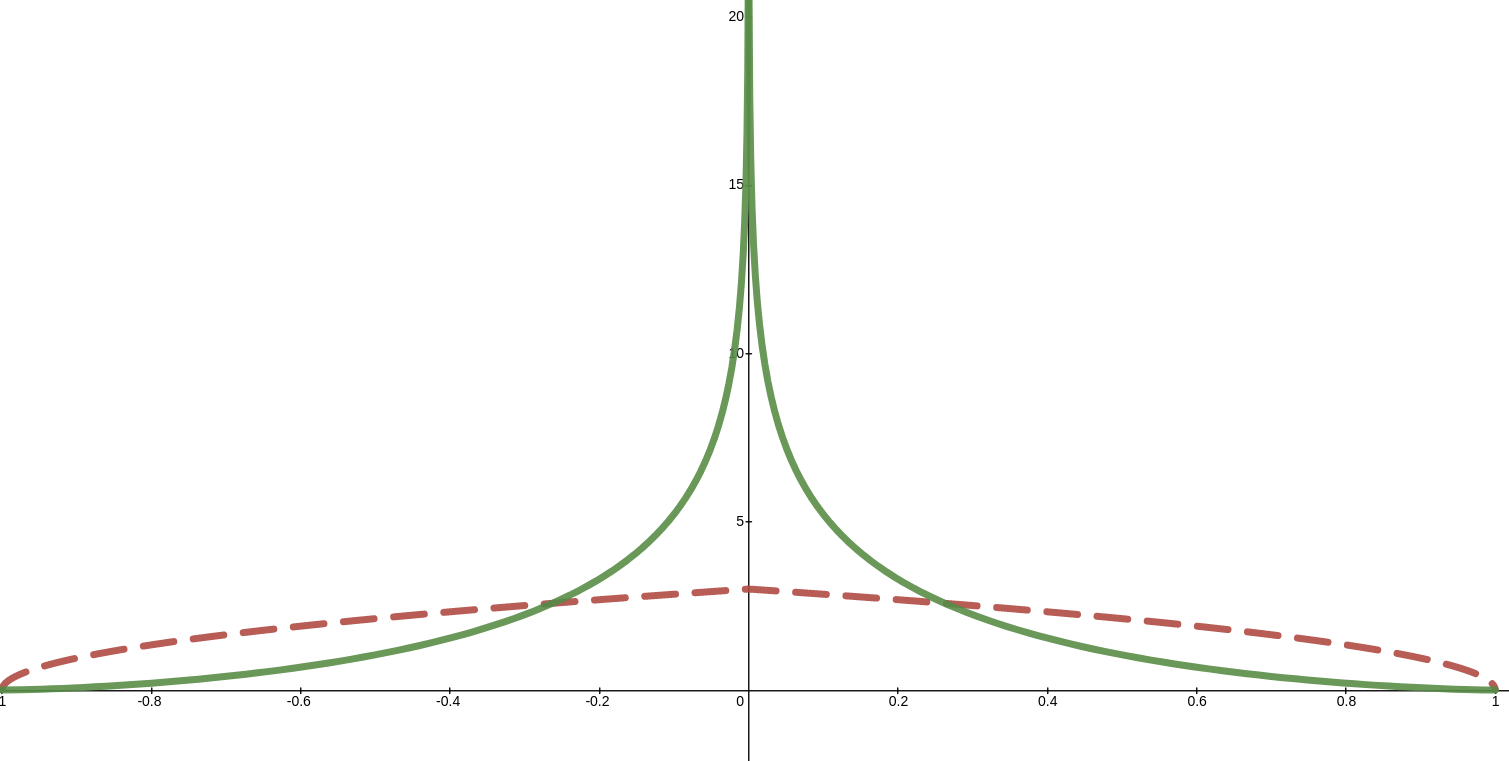}
\caption{Illustration of example 1 for $p=3$ and $p=1.5$. The function 
$\psi$ corresponds to the red dashed line and the function 
$K_\psi$ to the solid green line.
https://www.desmos.com/calculator/xcoi32o8bs .}
\end{figure}  
\begin{example}\normalfont
Consider $n=1$, $p>0$ and $\phi$ defined by 
\[
\psi(x) = \begin{cases}
0, & x<-1,\\
\frac{1}{2}p(1-|x|)^{p-1} & -1\leq x <1,\\
0, &1\leq x.
\end{cases}
\]
We have $\int_{-1}^1 \psi(x)=1$ and $\psi\geq 0$ and $\max_{x} \psi(x)=\frac{1}{2}p$. Additionally if we take 
$A=[-1,1]$ we have $|A|=2$ and it holds that the mass of the region 
$[\psi>r]$ is given by
\[
\int_{[\psi > r]} \psi =2\int_0^{1-(\frac{2r}{p})^\frac{1}{p-1}} \psi =  1-\left(\frac{2r}{p}\right)^\frac{p}{p-1}.
\]
Therefore, given $s$, we can find $r(s)$ such that $\int_{[\psi > r]} \psi  =1-s$
 to obtain,
 \[
r=r(s)=\frac{1}{2}ps^{\frac{p-1}{p}}.
\]
Then, the measure of the region $B_s= [\psi> r(s)]\cap A$ is the lengh of the interval 
$
\left[-1+\left(\frac{2r}{p}\right)^\frac{1}{p-1},1-\left(\frac{2r}{p}\right)^\frac{1}{p-1}\right], 
$ 
that is,
\[
|B_s|=|[\psi\geq r(s)]|=2 \left(1-\left(\frac{2r}{p}\right)^\frac{1}{p-1} \right)=
2 \left(1-(s^\frac{p-1}{p})^\frac{1}{p-1} \right)=2(1-s^\frac{1}{p}).
\]
Given $y$, we can find $t(y)$ by
\[
r(t(y))=\frac{1}{2}pt^{\frac{p-1}{p}}=\frac{1}{2}p(1-x)^{p-1}
\]
which gives 
\[
t=(1-x)^p.
\]
Thus we obtain for the case $\lambda=1$ and for $y>0$
\begin{eqnarray}
K_\psi(y)&=& \int_0^{t({y})}
   \frac{1 }{
    |B_s| }ds\\ 
&=&\int_0^{(1-y)^p} \frac{1}{2(1-s^\frac{1}{p})}ds\\
&=& \frac{1}{2} p \left(
-\log(y)+\sum_{\ell=1}^{p-1} { {p-1}\choose \ell } \frac{(-1)^\ell}{\ell} x^\ell
\right). \label{eq:Kejemplo}
\end{eqnarray}
See Figure \ref{fig:ejemplo1} for an illustration.

From \eqref{eq:PAIandInnerProduct} we have $\displaystyle P(\phi,\psi)=  2 \int_{-1}^1  \phi K_\psi d{y}$, that is, an inner product with the function $K_\psi$ in \eqref{eq:Kejemplo}. Note that $K_\psi$ puts a very high weight around the value $\phi(0)$ and therefore ignoring other possible regions with hot spots. For instance if $\{-0.5,0,0.5\}=\arg\max_{z\in \mathbb{R}} \phi $ then the possible hot spots of $\phi$ at $0.5$ and $-0.5$ wont be detected but the value 
of $P(\phi,\psi)$  will be very high (as far as we select $p$ high enough). \endproof{}
\end{example}

\begin{remark} \normalfont \label{remarkPAI}
We then conclude that not the PAI at a particular level,  not the average of several PAIs at different levels, are good measures or indicators in order to select one among different possible predictions of the random variable 
$X$, in particular, hots spots reported after selecting among several models using PAI as a main indicator are not adequate.  See \cite{joshi2021considerations,Chainey_2008}.

In order to make better hot spots predictions in practice it is recommended, before computing hots spots using PAI as indicator, that the selected model shall be decided using other more suitable measures. See for instance \cite{cha2007comprehensive,gibbs2002choosing}.
\end{remark}

\section{The integral  average transform }\label{sec:IAT}
In the previous section we presented an application where integral of averages of a function on subregions are computed. We summarize this procedure as follows:  consider a function $f: \mathbb{R}^n\to \mathbb{R}$ and  family of regions 
neighboring ${\bf x}$, say $\{B_{s,\bf x} \subset \mathbb{R}^n\}_{s>0,{\bf x}\in\mathbb{R}^n}$  such that: 
$\{{\bf x}\} \subset B_{s,\bf x}\subset B_{s+\epsilon,\bf x}$ for all $s$ and $\epsilon>0$.
Define the following integral average transform of $f$ by
\begin{equation}\label{eq:defIntegralAverageTransform}
u({\bf x})=\int_\mathbb{R} \frac{\lambda(s,{\bf x})}{|B_{s,{\bf x}}|}\int_{B_{s,{\bf x}}} f(y)dy=
\int_\mathbb{R} \lambda(s,{\bf x})\fint_{B_{s,{\bf x}}} f(y)dy.
\end{equation}
That is, a weighed integral of {\it plain} average values of $f$ over the regions $B_{s,{\bf x}}$, $s\in\mathbb{R}$.  Here $\lambda(s,{\bf x})$ is a non-negative weight function.

This procedure is equivalent to integrating against a possible singular kernel. We formalize this interpretation in the next statement. 
\begin{theorem}
Define
\[K({\bf y} , {\bf x})= 
    \int_\mathbb{R}
    \frac{\lambda(s,{\bf x}) }{|B_{s,{\bf x}}| } {\bf 1}_{B_{s, {\bf x}}}({\bf y} )ds
\]
and consider $u$ defined in  \eqref{eq:defIntegralAverageTransform}.
Then, 
$u({\bf x})=\int_{\mathbb{R}^n} f({\bf y})K({\bf y},{\bf x})dy$.
\end{theorem}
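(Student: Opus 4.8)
The plan is to recognize this identity as a direct application of the Fubini--Tonelli theorem, mirroring the interchange of integration order already carried out in the subsection on average PAI and kernels (the derivation leading to \eqref{eq:PAIandInnerProduct}). The only content beyond that computation is the bookkeeping of the two integration variables $s\in\mathbb{R}$ and ${\bf y}\in\mathbb{R}^n$, so the argument is short once the integrability is secured.

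First I would rewrite the inner average appearing in \eqref{eq:defIntegralAverageTransform} as an integral over the whole space against an indicator, namely
\[
\int_{B_{s,{\bf x}}} f({\bf y})\, d{\bf y}=\int_{\mathbb{R}^n} f({\bf y})\,{\bf 1}_{B_{s,{\bf x}}}({\bf y})\, d{\bf y},
\]
so that the transform becomes
\[
u({\bf x})=\int_{\mathbb{R}}\frac{\lambda(s,{\bf x})}{|B_{s,{\bf x}}|}\left(\int_{\mathbb{R}^n} f({\bf y})\,{\bf 1}_{B_{s,{\bf x}}}({\bf y})\, d{\bf y}\right)ds.
\]

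Second I would interchange the order of integration, moving the $s$-integral inside. This factors $f({\bf y})$ out and leaves exactly the kernel defined in the statement:
\[
u({\bf x})=\int_{\mathbb{R}^n} f({\bf y})\left(\int_{\mathbb{R}}\frac{\lambda(s,{\bf x})}{|B_{s,{\bf x}}|}\,{\bf 1}_{B_{s,{\bf x}}}({\bf y})\, ds\right)d{\bf y}=\int_{\mathbb{R}^n} f({\bf y})\,K({\bf y},{\bf x})\, d{\bf y}.
\]

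The main obstacle, and really the only nontrivial point, is justifying the interchange, which requires an integrability hypothesis implicit in the phrase that $f$ is ``assumed regular enough''. Concretely, I would first apply Tonelli to the non-negative integrand $\frac{\lambda(s,{\bf x})}{|B_{s,{\bf x}}|}\,|f({\bf y})|\,{\bf 1}_{B_{s,{\bf x}}}({\bf y})$ to confirm that the joint integral is finite; this is where the decay and integrability of $f$, together with the integrability in $s$ of $\lambda(s,{\bf x})/|B_{s,{\bf x}}|$ on the relevant support, enter. Having checked finiteness, I would invoke Fubini to transfer the equality from $|f|$ to $f$ itself. The nesting assumption ${\bf x}\in B_{s,{\bf x}}\subset B_{s+\epsilon,{\bf x}}$ guarantees that $s\mapsto {\bf 1}_{B_{s,{\bf x}}}({\bf y})$ is monotone, hence measurable, so that $K({\bf y},{\bf x})$ is well defined (possibly taking the value $+\infty$ on the singular set near ${\bf x}$) and the hypotheses of the Tonelli step are indeed met.
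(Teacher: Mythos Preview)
Your proposal is correct and follows precisely the approach the paper intends: the theorem is stated without a separate proof because it is simply the abstract version of the Fubini computation already displayed in the derivation of \eqref{eq:PAIandInnerProduct}, and your argument reproduces that computation verbatim (insert the indicator, swap the order of integration, and read off $K$). Your added remarks on using Tonelli first for nonnegative integrands and on the monotonicity of $s\mapsto {\bf 1}_{B_{s,{\bf x}}}({\bf y})$ to secure measurability are more careful than the paper itself, which leaves these points implicit.
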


As in the previous subsections we can choose for instance, subsets related to levels curves of another functions $\psi_{\bf x}:\mathbb{R}^n\to \mathbb{R}$, say,  
\begin{equation}\label{eq:def:B_sGeneralCase}
B_{s,{\bf x}} = \psi_{\bf x}^{-1}((-\infty,s)).
\end{equation}

\begin{remark}\normalfont \label{remarkPAIandIAT}
In Section \ref{sec:motivation} the average PPAI, $P(\phi,\psi)$, defined in 
\eqref{eq:def:Ppsiphi}, was an integral average transform of 
$\phi$ where the family of sets $B_s$ were given by levels sets of $\psi$. In particular, in Section \ref{sec:motivation} $\psi$ is continuous and has a maximum value ${\bf x}_0$, then $P$ corresponds to the integral average transform of $\phi$ evaluated at  ${\bf x}_0$. The definition of subregions 
$B_s$ in \eqref{eq:def:Bs} are related to superlevel sets of 
$\psi$ while for this section, and the rest of the paper, we use $B_s$ as sublevel sets;  See (\ref{eq:def:B_sGeneralCase}). In the case of super level set the possible singularity of the associated kernel 
$K(\cdot, {\bf x} )$ will be related to the minimum value of 
$\psi_x$ that, in order to fix ideas we are assuming to be  a singleton. Recall that we are assuming 
that 
$\{{\bf x}\} \subset B_{s,\bf x}\subset B_{s+\epsilon,\bf x}$ for all $s$ and $\epsilon>0$.
\end{remark}

Let us now consider a non-negative kernel given by $K:\mathbb{R}^n\times \mathbb{R}^n\to \mathbb{R}$ a  with a possible singularity at ${\bf x}={\bf y}$ and smooth for 
 ${\bf x}\not ={\bf y}$. We can then take, for instance, 
 $\psi_{\bf x}({\bf y})= 1/K({\bf x}, {\bf y})^q$ for 
 some $q>0$. Then
\begin{equation} 
B_{s,{\bf x}} =\left[\frac{1}{s} < K(\cdot , {\bf x} )^q\right]=
\{ { \bf z} \in \mathbb{R}^n : 
s^{-1/q}<K({\bf z},{\bf x})\}.
\end{equation}
Note that $\partial B_{s,{\bf x}}$ corresponds to the $1/s$ level curve of $K(\cdot,{\bf x} )^q$.  In this case if we define 
\[
\lambda(s,{\bf x}) =\frac{1}{q \ s^{\frac{1}{q}+1}}|B_{s,{\bf x}}| 
\]
and therefore we have 
\[
  \int_\mathbb{R}
    \frac{\lambda(s,{\bf x}) }{|B_{s,{\bf x}}| } { 1}_{B_{s, {\bf x}}}({\bf y} )ds =
 \int_{\frac{1}{ K({\bf x},{\bf y} )^q}}^\infty  \frac{\lambda(s,{\bf x}) }{|B_{s,{\bf x}}| } ds
 = \int_{\frac{1}{ K({\bf x},{\bf y} )^q}}^\infty  \frac{1}{q \ s^{\frac{1}{q}+1}} ds= K({\bf x},{\bf y}).
    \]
We conclude that there is several ways to re-interpret an inner product against a kernel as a integral average transform. \\

Due to the importance of kernels and its ubiquity in pure and applied mathematics it is always useful to have several interpretations and equivalent formulations for the results that are written as  integration against kernels. In the next section we present some well know examples.

\subsection{Fundamental solution of Laplace equation as an integral average transform}

As a particular example let us consider the application of singular kernels related to the Poisson problem (\cite{gilbarg2015elliptic}). 
In order to fix ideas we do not considered the more general setting regarding minimal regularity of functions involved. Instead we assume 
that all functions are sufficiently regular in order to show the 
usage of the integral average transform \eqref{eq:defIntegralAverageTransform}.

Let $f:\mathbb{R}^n\to \mathbb{R}$. Here we need to find a function $u$ such that
\begin{equation}\label{eq:poisson}
-\Delta u = f \mbox{ in  }  \mathbb{R}^n.
\end{equation}
Consider $\psi_{\bf x}({\bf z})=||{\bf z}-{\bf x}||.$ In this case 
\begin{equation}\label{eq:defBsfGreen}
B_{s,{\bf x}}=B_s({\bf x}) = \{ { \bf z} \in \mathbb{R}^n : 
||{\bf z}-{\bf x}||< s\}.
\end{equation}
Denote $\omega_n$ the volume of the unit ball in $\mathbb{R}^n$ and define the weight
\begin{equation}\label{weight}
\lambda(s,{\bf x})=\begin{cases}
\displaystyle 
\frac{|B_s({\bf x})|}{|\partial B_s({\bf x})|}=
\frac{\omega_n s^n}{n \omega_n s^{n-1}}= \frac{s}{n}, & 0\leq s, \quad n\geq 2,\\
0, & \mbox{otherwise}.
\end{cases}
\end{equation}
Then, in order to compute the corresponding kernel note that, 
for $R>0$ and $||{\bf x}-{\bf y}||<R$, 
\begin{eqnarray}
K_R({\bf y} , {\bf x}):&=&     \int_0^{R}
    \frac{1}{|\partial B_s({\bf x})| } {\bf 1}_{B_{s,{\bf x}}}({\bf y} )ds \\
    &=&
     \int_{||{\bf x}-{\bf y}||}^{R}
    \frac{1}{n\omega_n s^{n-1}} ds \nonumber\\
    &=&
    \begin{cases}
    \frac{1}{2\pi}\log\left(\frac{ R}{||{\bf x}-{\bf y}||}\right), & n=2,\\
    \frac{1}{n(2-n)\omega_n}\left(R^{2-n}- ||{\bf x}-{\bf y}||^{2-n}\right),& n\geq 3. 
    \end{cases} \label{GRball}\\
    &=& C_n(R) +G({\bf y} , {\bf x}) \label{GreenFunctioninRn}
\end{eqnarray}
where the constant $C_n(R)$ is given by 
\[
C_n(R)=\begin{cases}
    \frac{1}{2\pi}\log(R), & n=2,\\
    \frac{1}{n(2-n)\omega_n}R^{2-n}& n\geq 3,
    \end{cases}
    \]
and $G$ is the fundamental solution of the Laplace equation. See 
\cite{gilbarg2015elliptic,evans1998partial}.
By taking $R\to\infty$ we see that, for $n\geq 3$,
\[
K({\bf y} , {\bf x})= 
\int_0^{\infty}
    \frac{1}{|\partial B_s({\bf x})| } {\bf 1}_{B_{s,{\bf x}}}({\bf y} )ds  =G(x,y) .
\]


We have the following result.

\begin{theorem}\label{th:GreeRn}
Assume that $n\geq 3$ and that $f$ is regular enough and with compact support.
Considered the integral average transform  defined 
in \eqref{eq:defIntegralAverageTransform} with family of sets in \eqref{eq:defBsfGreen} and  
weight in \eqref{weight}, that is, 
\begin{equation} \label{eq:GreenRnSolution}
u({\bf x})= \int_{0}^{\infty}  \frac{s}{n} \fint_{||{\bf x}-{\bf y}||<s}  f({\bf y})dyds,
\end{equation}
where 
\[
\fint_{||{\bf x}-{\bf y}||<s}  f({\bf y})d({\bf y})ds
=\fint_{ B_s({\bf x}) }  f({\bf y})dyds
=\frac{1}{| B_s({\bf x})|}\int_{ B_s({\bf x})}  f({\bf y})d{\bf y}ds
\]
is the average of $f$ over the ball 
$B_s({\bf x})$. Then 
\[
-\Delta u = f.
\]
\end{theorem}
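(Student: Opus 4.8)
The plan is to reduce the statement to the classical theory of the Newtonian potential, exploiting the fact that this particular choice of family $\{B_s({\bf x})\}$ in \eqref{eq:defBsfGreen} and weight $\lambda(s,{\bf x})=s/n$ in \eqref{weight} has already been matched to the fundamental solution in \eqref{GreenFunctioninRn}. First I would apply the representation theorem of this section (the identity $u({\bf x})=\int_{\mathbb{R}^n} f({\bf y})K({\bf y},{\bf x})\,d{\bf y}$): since
\[
\frac{\lambda(s,{\bf x})}{|B_{s,{\bf x}}|}=\frac{s/n}{\omega_n s^n}=\frac{1}{n\omega_n s^{n-1}}=\frac{1}{|\partial B_s({\bf x})|},
\]
the transform \eqref{eq:GreenRnSolution} is exactly integration of $f$ against $K$. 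By the computation culminating in \eqref{GreenFunctioninRn} with $R\to\infty$ (legitimate precisely for $n\ge 3$, where $R^{2-n}\to 0$) one gets $K({\bf y},{\bf x})=G({\bf x},{\bf y})=\Phi({\bf x}-{\bf y})$ with $\Phi({\bf z})=\frac{1}{n(n-2)\omega_n}\|{\bf z}\|^{2-n}$. Before using this I would record that $\Phi\in L^1_{\mathrm{loc}}(\mathbb{R}^n)$ (the exponent $n-2$ is $<n$) and that $f$ is bounded with compact support, so all integrals converge and Fubini is justified; the same bound $n\ge 3$ is what makes the outer $s$--integral in \eqref{eq:GreenRnSolution} converge at $s=\infty$, where $\tfrac{s}{n}\fint_{B_s({\bf x})}f\sim \tfrac{1}{n\omega_n}(\int f)\,s^{1-n}$.

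Next I would rewrite $u$ as the convolution $u=\Phi*f$ and verify $-\Delta u=f$ by the standard argument. Because $\Phi\in L^1_{\mathrm{loc}}$ while $f\in C^2_c$, differentiation passes onto the smooth factor,
\[
\Delta u({\bf x})=\int_{\mathbb{R}^n}\Phi({\bf z})\,\Delta_{\bf x} f({\bf x}-{\bf z})\,d{\bf z}.
\]
Then I would split the domain into $B_\varepsilon(0)$ and its complement. On $B_\varepsilon(0)$ the contribution is $O(\varepsilon^2)$ for $n\ge 3$, controlled by $\|D^2 f\|_\infty\int_{B_\varepsilon}|\Phi|=O(\varepsilon^2)$. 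On the exterior region $\Phi$ is harmonic, so two integrations by parts transfer $\Delta$ from $f$ onto $\Phi$ and leave only boundary integrals on $\partial B_\varepsilon(0)$; tracking the normal derivative of $\Phi$ against the surface measure $n\omega_n\varepsilon^{n-1}$ (and discarding the $\partial_\nu f$ term, which vanishes in the limit) shows these converge to $-f({\bf x})$ as $\varepsilon\to 0$. This yields $\Delta u=-f$, that is $-\Delta u=f$; for the detailed estimates I would cite \cite{evans1998partial,gilbarg2015elliptic}.

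The step I expect to be the genuine obstacle is the last one: controlling the singularity of $\Phi$ at the origin while extracting exactly $f({\bf x})$ from the boundary terms. Everything upstream is bookkeeping — the kernel identity is already done in \eqref{GreenFunctioninRn}, and moving the derivatives onto $f$ is routine given $\Phi\in L^1_{\mathrm{loc}}$ — but the cancellation producing the precise constant hinges on the normalization $\frac{1}{n(n-2)\omega_n}$ dovetailing with the surface area $n\omega_n\varepsilon^{n-1}$ of $\partial B_\varepsilon(0)$. A secondary point worth making explicit, to keep the theorem self-contained, is what \emph{regular enough} should mean: $f\in C^2_c(\mathbb{R}^n)$ (or $f$ with compactly supported, H\"older-continuous second derivatives) is the hypothesis under which this classical verification runs.
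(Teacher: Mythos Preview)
Your proof is correct and follows essentially the same route as the paper: both reduce \eqref{eq:GreenRnSolution} via the kernel identity \eqref{GreenFunctioninRn} to the Newtonian potential $\int f\,G$ and then invoke that $G$ is the fundamental solution. The only cosmetic differences are that the paper handles the passage $R\to\infty$ through the truncated quantity $u_R$ (so the constant $C_n(R)$ appears and is shown to vanish) rather than taking the limit directly in the kernel, and that the paper simply cites the identity $-\Delta(G*f)=f$ from \cite{evans1998partial,gilbarg2015elliptic} whereas you sketch its standard $\varepsilon$--ball verification.
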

\begin{proof}
Using \eqref{GreenFunctioninRn} we have, for $R>0$, the truncated integral, 
\begin{eqnarray}
u_R({\bf x})&=&\int_{0}^{R}  \frac{s}{n} \fint_{||{\bf x}-{\bf y}||<s}  f(y)dyds=\int_{B_R({\bf 0}) }f({\bf y}) K_R({\bf x},{\bf y})d{\bf y} \\
&=&C_n(R)\int_{B_R({\bf x}) }f({\bf y})d{\bf y}+ 
\int_{{B_R({\bf x}) }}f({\bf y}) G({\bf x},{\bf y})d{\bf y}.
\end{eqnarray}
By taking $R$ large enough, we see that 
\[
u_R({\bf x})=
C_n(R)\int_{\mathbb{R}^n }f({\bf y})d{\bf y}+ 
\int_{\mathbb{R}^n}f({\bf y}) G({\bf x},{\bf y})d{\bf y}.
\]
The results follows from the fact that $G$ is the fundamental solution of the Laplace equation and the other term vanishes when $R\to \infty$ for $n\geq 3$ and $u_R({\bf x}) \to u({\bf x})$. 
\end{proof}

\begin{remark}\normalfont 
Consider now the case $n=2$, 
${\bf x}\in \mathbb{R}^n$, and 
$\epsilon >0$. For all  $R$ large enough we have
\[
u_R({\bf z})=
C_n(R)\int_{\mathbb{R}^n }f({\bf y})d{\bf y}+ 
\int_{\mathbb{R}^n}f({\bf y}) G({\bf z},{\bf y})d{\bf y}.
\]
for all ${\bf z}\in B_\epsilon({\bf x})$. 
It is enough to take $R=R_0+||x||+\epsilon$ where $B_{R_0}({\bf 0})$ contains the support of $f$.
We then see that 
$-\Delta u_R({\bf x})= f({\bf x})$.
\end{remark}

We have also the following result concerning a generalization of the mean value property. See  
\cite{delaurentis1990monte}.

\begin{theorem}
Assume that $u$ and $f$ are sufficiently smooth and satisfy
\[
-\Delta u = f \mbox{ in  } B_R({\bf x}_0) .
\]
Then the following \it{mean value property} holds,
\begin{equation}\label{eq:newmeanvalue}
u({\bf x}_0)= \fint_{\partial B_R({\bf x}_0)} u({\bf y})d{\bf y}+
\int_{0}^R  \frac{s}{n} \fint_{||{\bf x}_0-{\bf y}||<s}  f({\bf y})d{\bf y}ds.
\end{equation}

\end{theorem}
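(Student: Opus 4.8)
The plan is to follow the classical spherical-average argument that underlies the mean value property for harmonic functions, now carrying the extra forcing term $f$ through the computation. First I would introduce the spherical average
\[
\phi(s) = \fint_{\partial B_s({\bf x}_0)} u({\bf y})\, d{\bf y},
\]
and, after the change of variables ${\bf y} = {\bf x}_0 + s{\bf z}$ with ${\bf z}$ ranging over the unit sphere, differentiate in $s$ to obtain $\phi'(s) = \fint_{\partial B_s({\bf x}_0)} \partial_\nu u\, d{\bf y}$, where $\partial_\nu$ denotes the outward normal derivative. The smoothness assumption on $u$ justifies differentiating under the integral sign, and the normalization of $\fint$ makes the Jacobian factors cancel cleanly.

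Next, I would apply the divergence theorem to rewrite the flux as a volume integral of the Laplacian and use the equation $-\Delta u = f$:
\[
\int_{\partial B_s({\bf x}_0)} \partial_\nu u\, d{\bf y} = \int_{B_s({\bf x}_0)} \Delta u\, d{\bf y} = -\int_{B_s({\bf x}_0)} f\, d{\bf y}.
\]
Dividing by the surface measure $|\partial B_s({\bf x}_0)| = n\omega_n s^{n-1}$ and re-expressing the right-hand side through the solid average via $|B_s({\bf x}_0)| = \omega_n s^n$ gives the identity
\[
\phi'(s) = -\frac{s}{n} \fint_{B_s({\bf x}_0)} f\, d{\bf y}.
\]
This is the heart of the argument; the only bookkeeping to watch is the ratio $|B_s({\bf x}_0)|/|\partial B_s({\bf x}_0)| = s/n$, which is precisely the weight $\lambda(s,{\bf x})$ appearing in \eqref{weight}, tying this computation back to the integral average transform.

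Finally, I would integrate $\phi'$ from $0$ to $R$. Continuity of $u$ at ${\bf x}_0$ yields $\lim_{s\to 0^+}\phi(s) = u({\bf x}_0)$, so the fundamental theorem of calculus gives
\[
\fint_{\partial B_R({\bf x}_0)} u({\bf y})\, d{\bf y} - u({\bf x}_0) = -\int_0^R \frac{s}{n}\fint_{B_s({\bf x}_0)} f({\bf y})\, d{\bf y}\, ds,
\]
and rearranging produces \eqref{eq:newmeanvalue}. The main obstacle, if any, is purely technical: justifying the differentiation under the integral and the $s\to 0^+$ limit, both of which follow from the assumed smoothness of $u$ and $f$ on $B_R({\bf x}_0)$. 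As a sanity check, setting $f\equiv 0$ collapses the correction integral and recovers the classical mean value property for harmonic functions, confirming that the constants are normalized correctly.
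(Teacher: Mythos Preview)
Your argument is correct and self-contained: differentiating the spherical mean, applying the divergence theorem, and integrating back is the standard textbook derivation, and the weight $s/n$ falls out exactly as you note.

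The paper, however, proceeds differently. It quotes the representation formula
\[
u({\bf x}_0)= \fint_{\partial B_R({\bf x}_0)} u({\bf y})\,d{\bf y}+\int_{B_R({\bf x}_0)} f({\bf y})\,G_{B_R}({\bf x}_0,{\bf y})\,d{\bf y}
\]
from \cite{delaurentis1990monte}, observes that the ball Green's function $G_{B_R}({\bf x}_0,\cdot)$ coincides with the truncated kernel $K_R({\bf x}_0,\cdot)$ computed in \eqref{GRball}, and then applies Fubini to unpack $K_R$ back into the iterated average integral. Your route is more elementary and avoids the external citation; the paper's route is chosen to highlight that the correction term is literally an instance of the integral average transform with kernel $K_R$, which is the organizing theme of the section. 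Both arrive at the same identity, but the paper's version makes the link to \eqref{GRball} and the transform machinery explicit rather than rederiving it from scratch.
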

\begin{proof}
From \cite{delaurentis1990monte} we have 
\[
u({\bf x}_0)= \fint_{\partial B_R({\bf x}_0)} u({\bf y})d{\bf y}+
\int_{B_R({\bf x}_0)}   f({\bf y})G_{B_R}({\bf x}_0,{\bf y})d{\bf y}ds
\]
where $G_{B_R}$ is the Green function on the ball $B_R({\bf 0})$ given by
\[
G_{B_R}({\bf x}_0,{\bf y})=
\begin{cases}
  \frac{1}{2\pi}\log\left(\frac{ R}{||{\bf x}_0-{\bf y}||}\right), & n=2,\\
    \frac{1}{n(2-n)\omega_n}\left(R^{2-n}- ||{\bf x}_0-{\bf y}||^{2-n}\right),& n\geq 3. 
\end{cases} 
\]
From  \eqref{GRball} and the Fubini's theorem it follows
\begin{eqnarray}\nonumber
\int_{B_R({\bf x}_0)}   f({\bf y})G_{B_R}({\bf x}_0,{\bf y})d{\bf y}ds&=&
\int_{B_R({\bf x}_0)}   f({\bf y})K_R({\bf x}_0,{\bf y})d{\bf y}ds \nonumber \\
&=&
\int_{0}^R  \frac{s}{n} \fint_{||{\bf x}_0-{\bf y}||<s}  f({\bf y})d{\bf y}ds.
\end{eqnarray}
This finished the proof.
\end{proof}
We now turn to show how to rewrite the solution of the Poisson problem on the positive semi-space as an integral average transform. 
Here we need to find a function $u$ such that
\begin{equation}
-\Delta u = f \mbox{ in  }  \mathbb{R}^n_+=\{ { \bf z} \in \mathbb{R}^n : z_n>0\} 
\end{equation}
with $u=0$ on $\partial \mathbb{R}^n_+.$

\begin{theorem} 
Assume that $n\geq 3$, $f$ is regular enough and with compact support contained in  $\mathbb{R}^n_+$  and considered the integral average transform  defined by
\begin{equation} \label{eq:halfspaceAIT}
u({\bf x})= \int_{0}^{\infty}  \frac{s}{n} \fint_{B_s({\bf x})}  f(y) { 1}_{B_s({\bf x})\setminus B_s({\bf x}-2x_n{\bf e}_n) }dyds.
\end{equation} Then 
\[
-\Delta u = f\mbox{ in  }  \mathbb{R}^n_+
\]
and $u=0$ on $\partial \mathbb{R}^n_+.$
\end{theorem}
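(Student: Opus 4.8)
The plan is to show that the integral average transform \eqref{eq:halfspaceAIT} is nothing but integration of $f$ against the Dirichlet Green's function of the half-space furnished by the method of images, and then to conclude via the classical representation formula together with Theorem \ref{th:GreeRn}. Throughout write $\mathbf{x}^\ast=\mathbf{x}-2x_n\mathbf{e}_n$ for the reflection of $\mathbf{x}$ across $\partial\mathbb{R}^n_+$, and split $\mathbf{x}=(x',x_n)$, $\mathbf{y}=(y',y_n)$. First I would absorb the weight exactly as in \eqref{weight}, using $\frac{s}{n\,|B_s(\mathbf{x})|}=\frac{1}{|\partial B_s(\mathbf{x})|}=\frac{1}{n\omega_n s^{n-1}}$, so that \eqref{eq:halfspaceAIT} reads
\[
u(\mathbf{x})=\int_0^{\infty}\frac{1}{|\partial B_s(\mathbf{x})|}\int_{B_s(\mathbf{x})\setminus B_s(\mathbf{x}^\ast)}f(\mathbf{y})\,d\mathbf{y}\,ds .
\]
Exactly as in the derivation of \eqref{GreenFunctioninRn} and in the proof of Theorem \ref{th:GreeRn}, I would then interchange the $s$- and $\mathbf{y}$-integrations by Fubini's theorem to obtain $u(\mathbf{x})=\int_{\mathbb{R}^n}f(\mathbf{y})K_+(\mathbf{y},\mathbf{x})\,d\mathbf{y}$, where
\[
K_+(\mathbf{y},\mathbf{x})=\int_0^{\infty}\frac{1}{|\partial B_s(\mathbf{x})|}\,\mathbf{1}_{B_s(\mathbf{x})\setminus B_s(\mathbf{x}^\ast)}(\mathbf{y})\,ds .
\]

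The heart of the argument is the evaluation of $K_+$. Since $\mathbf{y}\in B_s(\mathbf{x})\setminus B_s(\mathbf{x}^\ast)$ is equivalent to $\|\mathbf{y}-\mathbf{x}\|<s\le\|\mathbf{y}-\mathbf{x}^\ast\|$, the indicator restricts the $s$-integral to the interval $(\|\mathbf{y}-\mathbf{x}\|,\,\|\mathbf{y}-\mathbf{x}^\ast\|]$. Here I would record the elementary inequality that for $\mathbf{x}\in\mathbb{R}^n_+$ and $\mathbf{y}$ in the support of $f$ (so $x_n>0$ and $y_n>0$) one has
\[
\|\mathbf{y}-\mathbf{x}\|^2=\|y'-x'\|^2+(y_n-x_n)^2<\|y'-x'\|^2+(y_n+x_n)^2=\|\mathbf{y}-\mathbf{x}^\ast\|^2,
\]
so the interval is nonempty. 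Splitting the integral at an arbitrary $R>\|\mathbf{y}-\mathbf{x}^\ast\|$ and applying \eqref{GreenFunctioninRn} to each piece, one gets $K_+(\mathbf{y},\mathbf{x})=K_R(\mathbf{y},\mathbf{x})-K_R(\mathbf{y},\mathbf{x}^\ast)=G(\mathbf{y},\mathbf{x})-G(\mathbf{y},\mathbf{x}^\ast)$, the divergent constant $C_n(R)$ cancelling. Thus $K_+(\mathbf{y},\mathbf{x})=G(\mathbf{y},\mathbf{x})-G(\mathbf{y},\mathbf{x}^\ast)$ is precisely the half-space Green's function produced by the method of images.

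Finally I would conclude. Because $f$ is supported in $\mathbb{R}^n_+$ we obtain $u(\mathbf{x})=\int_{\mathbb{R}^n_+}f(\mathbf{y})\,\big(G(\mathbf{y},\mathbf{x})-G(\mathbf{y},\mathbf{x}^\ast)\big)\,d\mathbf{y}$. Theorem \ref{th:GreeRn} handles the first term and yields $-\Delta u=f$ from the Newtonian part, while the image term $\mathbf{x}\mapsto\int_{\mathbb{R}^n_+}f(\mathbf{y})G(\mathbf{y},\mathbf{x}^\ast)\,d\mathbf{y}$ is harmonic on $\mathbb{R}^n_+$ (reflection is an isometry and the singularity $\mathbf{x}^\ast=\mathbf{y}$, i.e. $x_n=-y_n$, cannot occur for $\mathbf{x},\mathbf{y}\in\mathbb{R}^n_+$) and hence contributes nothing to $-\Delta u$ in the interior. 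The boundary condition is immediate: for $x_n=0$ one has $\mathbf{x}^\ast=\mathbf{x}$, so $G(\mathbf{y},\mathbf{x})-G(\mathbf{y},\mathbf{x}^\ast)$ vanishes identically (equivalently the interval $(\|\mathbf{y}-\mathbf{x}\|,\|\mathbf{y}-\mathbf{x}^\ast\|]$ collapses to a point), whence $u=0$ on $\partial\mathbb{R}^n_+$. The step I expect to be the main obstacle is the clean evaluation of $K_+$ together with the verification that the set difference $B_s(\mathbf{x})\setminus B_s(\mathbf{x}^\ast)$ encodes exactly the image subtraction; this rests on the distance inequality above, valid precisely because both $\mathbf{x}$ and $\mathrm{supp}\,f$ lie in the open upper half-space. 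A secondary technical matter, already faced in Theorem \ref{th:GreeRn}, is justifying the Fubini interchange and the passage $u_R\to u$, which uses $n\ge 3$ to control the integrand at infinity.
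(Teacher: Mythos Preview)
Your proposal is correct and follows essentially the same approach as the paper: both identify the integral average transform \eqref{eq:halfspaceAIT} with integration against the half-space Green's function $G(\mathbf{y},\mathbf{x})-G(\mathbf{y},\mathbf{x}^\ast)$ obtained by the method of images. The only cosmetic difference is the direction of the argument: the paper starts from the known representation \eqref{Greendiff}, applies \eqref{eq:GreenRnSolution} to each Newtonian term, and then collapses the difference of averages using the support condition on $f$, whereas you start from \eqref{eq:halfspaceAIT}, apply Fubini to extract the kernel $K_+$, and evaluate it directly as $K_R(\mathbf{y},\mathbf{x})-K_R(\mathbf{y},\mathbf{x}^\ast)$ with the constants cancelling.
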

\begin{proof}
We have (see \cite{evans1998partial,gilbarg2015elliptic})
\begin{eqnarray}
u({\bf x})&=& 
\int_{\mathbb{R}^n}f({\bf y}) \left( G({\bf x},{\bf y})-
G({\bf x}-2x_n{\bf e}_n,{\bf y})d{\bf y} \right) \label{Greendiff}\\
&=& \int_{0}^{\infty}  \frac{s}{n}\left( \fint_{B_s({\bf x})}  f(y)  dy
- \fint_{B_s({\bf x}-2x_n{\bf e}_n)}  f(y)dy\right)
ds
\end{eqnarray}
Here we have used \eqref{eq:GreenRnSolution}.
The result follows from recalling that the support of $f$ is contained in 
$\mathbb{R}^n_+$ and by noting that 
\[
\fint_{B_s({\bf x})}  f(y)  dy
- \fint_{B_s({\bf x}-2x_n{\bf e}_n)}  f(y)dy=
\fint_{B_s({\bf x})}  f(y) { 1}_{B_s({\bf x})\setminus B_s({\bf x}-2x_n{\bf e}_n) }dy.
\]
\end{proof}

\begin{remark} \normalfont
For $n=2$  we have, for 
$R$ large enough 
\begin{eqnarray*}
G({\bf x},{\bf y})-
G({\bf x}-2x_n{\bf e}_n,{\bf y})d{\bf y} &=&
K_R({\bf x},{\bf y})-
K_R({\bf x}-2x_n{\bf e}_n,{\bf y})
\end{eqnarray*}
and therefore
\begin{eqnarray}
\int_{\mathbb{R}^n}f({\bf y}) \left( G({\bf x},{\bf y})-
G({\bf x}-2x_n{\bf e}_n,{\bf y})d{\bf y} \right)=\\
\int_{\mathbb{R}^n}f({\bf y}) \left( K_R({\bf x},{\bf y})-
K_R({\bf x}-2x_n{\bf e}_n,{\bf y}) \right)d{\bf y}\\
\end{eqnarray}
We conclude the same result as in
\eqref{eq:halfspaceAIT} by taking 
$R\to\infty$.
\end{remark}

\begin{remark} \normalfont \label{remarkGreenIAT}
In the previous results we write the solution of the Poisson equation in the half space as an integral of averages of the forcing term over balls centered at 
${\bf x}$. In the case the ball 
$B_s({\bf x})\not  \subset \mathbb{R}^n_+ $, the function $f$ is cut to 
$f { 1}_{B_s({\bf x})\setminus B_s({\bf x}-2x_n{\bf e}_n) }$. See Figure \ref{fig:halfplane}.

\end{remark}

\begin{figure}
\begin{tikzpicture}[scale=1.2]

\filldraw[color=red!60, fill=blue!5, very thick](0,1) circle (1.5);  
\filldraw[black] (0,1) circle (2pt) node[anchor=west]{${\bf x}$};
\filldraw[color=red!60, fill=red!0, very thick](0,-1) circle (1.5);  
\filldraw[black] (0,-1) circle (2pt) node[anchor=west]{${\bf x}-2x_n{\bf e}_n$};
 \draw[gray, thick] (-4,0) -- (4,0);
\end{tikzpicture}
\caption{Illustration of $B_s({\bf x})\setminus B_s({\bf x}-2x_n{\bf e}_n)$. Formula \eqref{eq:halfspaceAIT} } illustrate that for the average of $f$ on $B_s({\bf x})$ only values of the shaded region, $B_s({\bf x})\setminus B_s({\bf x}-2x_n{\bf e}_n)$, are considered in the integration.
\label{fig:halfplane}
\end{figure}

Finally we mention that formula \eqref{eq:GreenRnSolution} give us additional interpretation of the solution of the Poisson equations in domain with boundaries. Instead of subtracting two solution as in formula \eqref{Greendiff}, we could extend the forcing term in such a way that averages centered at the boundary vanish. For instance, 
let us consider the domain $\mathbb{R}^n_+$ and ${\bf x}\in \partial \mathbb{R}^n_+$. In order to use \eqref{eq:GreenRnSolution} to obtain the solution of 
the Poisson equation in $\mathbb{R}^n_+$  we could extend $f$ to 
the whole $\mathbb{R}^n$, say $E(f):\mathbb{R}^n\to \mathbb{R}$ such that $E(f)|_{\mathbb{R}^n_+} = f$ and with  
$\int_{B_s({\bf x})}f({\bf y})d{\bf y}=0$ for all 
 ${\bf x}\in \partial \mathbb{R}^n_+$. We then have the following 
results as an alternative the Theorem 3.4.
\begin{theorem} \label{greeextension}
Assume that $n\geq 3$, $f$ is regular enough and with compact support contained in  $\mathbb{R}^n_+$  and considered 
\[
Ef({\bf y})=
\begin{cases}
  f({\bf y}),  & {\bf y}\in \mathbb{R}^n_+,\\
  0, &{\bf y}\in \partial\mathbb{R}^n_+,\\
  -f({\bf y}-2y_n{\bf e}_n),  & \mbox{elsewhere},\\
\end{cases} 
\]
and $u$ defined  by
\begin{equation} \label{eq:halfspaceAITextension}
u({\bf x})= \int_{0}^{\infty}  \frac{s}{n} \fint_{B_s({\bf x})}  Ef(y) dyds.
\end{equation} Then 
\[
-\Delta u = f\mbox{ in  }  \mathbb{R}^n_+
\]
and $u=0$ on $\partial \mathbb{R}^n_+.$
\end{theorem}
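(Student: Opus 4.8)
The plan is to recognize formula \eqref{eq:halfspaceAITextension} as nothing but the full-space integral average transform \eqref{eq:GreenRnSolution} applied to the extended forcing term $Ef$, and then to exploit the odd symmetry of $Ef$ across the hyperplane $\partial \mathbb{R}^n_+$ to get the homogeneous boundary value. In other words, this is the classical method of images repackaged through the averaging formula of Theorem \ref{th:GreeRn}.

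First I would verify that $Ef$ meets the hypotheses of Theorem \ref{th:GreeRn}. Since the support of $f$ is a compact subset of the \emph{open} half-space $\mathbb{R}^n_+$, the function $f$ vanishes in a neighborhood of $\partial \mathbb{R}^n_+$; hence its odd reflection $Ef$ vanishes there as well, so $Ef$ is as regular as $f$ on all of $\mathbb{R}^n$ and still has compact support. Theorem \ref{th:GreeRn} then applies verbatim to $Ef$ and yields $-\Delta u = Ef$ throughout $\mathbb{R}^n$. Restricting to $\mathbb{R}^n_+$, where $Ef = f$ by the first line of its definition, gives $-\Delta u = f$ in $\mathbb{R}^n_+$, which settles the PDE claim.

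For the boundary condition, I would introduce the reflection $\sigma(\mathbf{y}) = \mathbf{y} - 2 y_n \mathbf{e}_n$ and check the antisymmetry $Ef(\sigma(\mathbf{y})) = -Ef(\mathbf{y})$ for every $\mathbf{y}$, which follows immediately from the three-case definition of $Ef$ (the map $\sigma$ swaps the upper and lower half-spaces and fixes $\partial\mathbb{R}^n_+$). Now fix $\mathbf{x} \in \partial \mathbb{R}^n_+$, so that $\sigma(\mathbf{x}) = \mathbf{x}$; then each ball $B_s(\mathbf{x})$ is invariant under $\sigma$, and $\sigma$ is an isometry and hence measure-preserving. The change of variables $\mathbf{y} \mapsto \sigma(\mathbf{y})$ gives $\int_{B_s(\mathbf{x})} Ef = \int_{B_s(\mathbf{x})} Ef \circ \sigma = -\int_{B_s(\mathbf{x})} Ef$, so that $\int_{B_s(\mathbf{x})} Ef = 0$ for every $s > 0$. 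Consequently each inner average in \eqref{eq:halfspaceAITextension} vanishes and $u(\mathbf{x}) = 0$, as required.

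I do not anticipate a serious obstacle, since the extension $Ef$ is designed precisely so that the ball averages cancel at boundary points. The only points requiring a little care are the regularity and integrability needed to invoke Theorem \ref{th:GreeRn}, namely the convergence of the truncated transforms $u_R$ as $R \to \infty$ in dimension $n \geq 3$, and the justification of the change of variables on $B_s(\mathbf{x})$; both are routine given that $f$ has compact support inside the open half-space.
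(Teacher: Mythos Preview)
Your proof is correct, but it follows a different route from the paper's. The paper's proof is a one-line reduction: it simply observes that formula \eqref{eq:halfspaceAITextension} coincides with formula \eqref{eq:halfspaceAIT} (the integral average transform already shown in Theorem~3.4 to solve the half-space problem), and thus inherits the conclusion from that theorem. Concretely, splitting $\int_{B_s({\bf x})} Ef$ into the upper and lower half-spaces and reflecting the lower piece recovers exactly the difference of ball integrals appearing in the proof of Theorem~3.4.

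Your argument bypasses Theorem~3.4 entirely: you feed $Ef$ directly into the whole-space result (Theorem~\ref{th:GreeRn}) to get $-\Delta u = Ef$ on all of $\mathbb{R}^n$, restrict to obtain $-\Delta u = f$ in $\mathbb{R}^n_+$, and then use the odd symmetry of $Ef$ under the reflection $\sigma$ to kill each ball average at boundary points. This is precisely the mechanism the paper \emph{motivates} in the paragraph preceding the theorem (extending $f$ so that averages centered on $\partial\mathbb{R}^n_+$ vanish), so your approach is arguably closer to that discussion than the paper's own proof is. The trade-off: the paper's proof is shorter because it leans on Theorem~3.4, while yours is more self-contained and makes the role of the odd extension explicit.
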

\begin{proof}
Observe that formula \eqref{eq:halfspaceAITextension}
coincides with \eqref{eq:halfspaceAIT}.
\end{proof}
Similar argument can be applied to other domains with simple boundaries such as strips and cubes. 

\section{Final comments}\label{sec:conclusions}
In this short note we introduced the integral average transform. Our motivation comes from the need of a better mathematical understanding of some practical measures such as the prediction accuracy index that is popular in problems related to predictive security. In this paper we have explained the mathematical and practical context of this application in order to motivate
our main definition. The integral average transform is defined for a given function 
$f({\bf y})$ and it is the result of integrating plain averages of $f$ over an family of sets (containing a point ${\bf x}$) indexed by the integration argument. See 
\eqref{eq:defIntegralAverageTransform} for a precise definition. We show that this procedure is equivalent to computing integrals against a two variable kernel $K({\bf y},{\bf x})$ with a possible singularity at ${\bf x}$. We also show that 
any kernel integral of the form $\int f({\bf y})K({\bf y},{\bf x})d{\bf y}$ can also be interpreted as an integral average transform. Given the ubiquity of   kernels $K({\bf y},{\bf x})$ in the solution of many problems, we believe that this novel interpretation may be worth of further investigation. For instance, other kernels can be considered such as Poisson's kernels. We can also associate some dynamics to the parameter $s$.

\bibliographystyle{plain}
\bibliography{references}

\begin{thebibliography}{1}

\bibitem{cha2007comprehensive}
Sung-Hyuk Cha.
\newblock Comprehensive survey on distance/similarity measures between
  probability density functions.
\newblock {\em City}, 1(2):1, 2007.

\bibitem{Chainey_2008}
S.~Chainey, L.~Tompson, and Uhlig.
\newblock The utility of hotspot mapping for predicting spatial patterns of
  crime.
\newblock {\em Security Journal}, 21(1):4--28, 2008.

\bibitem{delaurentis1990monte}
John~M DeLaurentis and Louis~A Romero.
\newblock A monte carlo method for poisson's equation.
\newblock {\em Journal of Computational Physics}, 90(1):123--140, 1990.

\bibitem{drawve2016metric}
Grant Drawve.
\newblock A metric comparison of predictive hot spot techniques and rtm.
\newblock {\em Justice Quarterly}, 33(3):369--397, 2016.

\bibitem{evans1998partial}
Lawrence~C Evans.
\newblock Partial differential equations.
\newblock {\em Graduate studies in mathematics}, 19(4):7, 1998.

\bibitem{gibbs2002choosing}
Alison~L Gibbs and Francis~Edward Su.
\newblock On choosing and bounding probability metrics.
\newblock {\em International statistical review}, 70(3):419--435, 2002.

\bibitem{gilbarg2015elliptic}
David Gilbarg and Neil~S Trudinger.
\newblock {\em Elliptic partial differential equations of second order}, volume
  224.
\newblock springer, 2015.

\bibitem{joshi2021considerations}
Chaitanya Joshi, Sophie Curtis-Ham, Clayton D’Ath, and Deane Searle.
\newblock Considerations for developing predictive spatial models of crime and
  new methods for measuring their accuracy.
\newblock {\em ISPRS International Journal of Geo-Information}, 10(9):597,
  2021.

\bibitem{mohler2020learning}
George Mohler, Michael Porter, Jeremy Carter, and Gary LaFree.
\newblock Learning to rank spatio-temporal event hotspots.
\newblock {\em Crime Science}, 9(1):1--12, 2020.

\end{thebibliography}

\end{document}